\newtheorem{remark}{Remark}
\newtheorem{theorem}{Theorem}
\newtheorem{lemma}{Lemma}
\newcommand{\imj}{\mathrm{j}}
\def\BibTeX{{\rm B\kern-.05em{\sc i\kern-.025em b}\kern-.08em
    T\kern-.1667em\lower.7ex\hbox{E}\kern-.125emX}}
\begin{document}
\title{Analysis of Orthogonal Matching Pursuit for Compressed Sensing in Practical Settings
}

\author{\IEEEauthorblockN{Hamed Masoumi, Michel Verhaegen and Nitin Jonathan Myers}
\IEEEauthorblockA{Delft Center for Systems and Control, Delft University of Technology, The Netherlands\\
Email: \{H.Masoumi, M.Verhaegen, N.J.Myers\}@tudelft.nl}
}

\maketitle
\begin{abstract}
Orthogonal matching pursuit (OMP) is a widely used greedy algorithm for sparse signal recovery in compressed sensing (CS). Prior work on OMP, however, has only provided reconstruction guarantees under the assumption that the columns of the CS matrix have equal norms, which is unrealistic in many practical CS applications due to hardware constraints. In this paper, we derive sparse recovery guarantees with OMP, when the CS matrix has unequal column norms. Finally, we show that CS matrices whose column norms are comparable achieve tight guarantees for the successful recovery of the support of a sparse signal and a low mean squared error in the estimate. 
\end{abstract}

\begin{IEEEkeywords}
Compressive sensing, orthogonal matching pursuit, support recovery, mutual coherence.
\end{IEEEkeywords}

\section{Introduction}\label{sec1}
\par CS is a method to recover a sparse signal from its compressed representation \cite{candes2008introduction}, which is acquired by multiplying a known CS matrix with the unknown sparse signal. Reconstructing this signal is an ill-posed problem, as CS matrices have fewer rows than columns. CS algorithms, however, can estimate the unknown signal by exploiting a sparse prior.
\par The OMP \cite{tropp2007signal} is a popular greedy algorithm in CS. A key challenge in CS is to find the conditions under which a unique sparse solution is guaranteed. Such conditions are usually based on the restricted isometry constants (RICs) \cite{cai2011orthogonal,wang2012near,chang2014improved,liu2017some} and the \textit{coherence} of the CS matrix \cite{ben2010coherence,chi2012coherence,miandji2017probability,emadi2018performance,stankovic2021improved,amiraz2021tight}. Conditions based on the RICs depend on the restricted isometry property (RIP) of the CS matrix, which can be obtained in a tractable manner for certain random constructions. Verifying whether a given deterministic matrix satisfies the RIP, however, is NP-hard in general \cite{bandeira2013certifying}. The coherence of a CS matrix, i.e., the maximum absolute inner product of its two distinct \textit{$\ell_2$-normalized} columns, is tractable to compute for a given CS matrix. As a result, it is convenient to derive coherence-based CS guarantees when a CS matrix is specified.
\par In several applications, the columns of the CS matrix have unequal $\ell_2$ norms, such as in the case of CS-based wireless channel estimation with low-resolution phased arrays. The hardware imperfections in such arrays result in inevitable variations in the column norms of the CS matrix. Although CS algorithms, such as the OMP, are known to perform well even with such matrices, the impact of the variation in the column norms on sparse signal recovery remains unclear. To address this gap in the literature, we study the performance guarantees of the OMP algorithm for generic CS matrices whose columns can have different norms. 
\par Now, we discuss the literature on coherence-based performance guarantees of the OMP algorithm. For a noiseless CS setting, \cite{ stankovic2021improved } provides a guarantee for successful support recovery and exact reconstruction of the sparse signal. In \cite{ben2010coherence,chi2012coherence}, sufficient conditions for successful support recovery are derived and an upper-bound for the mean square error (MSE) of the recovered signal is presented. Finally, assuming that the CS measurements are corrupted by additive Gaussian noise, \cite{miandji2017probability,amiraz2021tight} obtain improved coherence-based performance guarantees for the OMP. The works in \cite{stankovic2021improved,ben2010coherence,chi2012coherence,miandji2017probability,amiraz2021tight}, assume a column-normalized CS matrix, i.e., its columns have the same $\ell_2$ norm. When the CS matrix is not column-normalized, \cite{bruckstein2008sparse} provides coherence-based conditions to recover the support of a sparse vector with \textit{non-negative} entries from a \textit{noiseless} set of measurements using the OMP. 

In this paper, we derive sufficient conditions to identify the support of a complex-valued sparse vector when the columns of the CS matrix have unequal $\ell_2$-norms and when the measurements are noisy. Then, we derive an upper-bound for the MSE of the reconstructed signal with the OMP. Our derivation follows a similar structure as the one in \cite{ben2010coherence}. Hence, our results reduce to those in \cite{ben2010coherence} when the CS matrix has equal column norms. For a fixed Frobenius norm on the CS matrix, we show that CS matrices with similar column norms achieve better sparse recovery performance using the OMP compared to those with substantially different column norms. 

\emph{Notation}: $a$, $\mathbf{a}$ and $\mathbf{A}$ denote a scalar, vector, and a matrix. $a_i$ is the $i^{\mathrm{th}}$ entry of $\mathbf{a}$. Also, $\mathbf{a}_i$ is the $i^{\mathrm{th}}$ column of $\mathbf{A}$. $\Vert\cdot\Vert_2$ is the $\ell_2$-norm operator.
$(\cdot)^{T}$ and $(\cdot)^{\ast}$ denote the transpose and conjugate-transpose operators. $[N]$ denotes the set $\{1,2,...,N\}$. $\mathcal{CN}(0,\sigma^{2})$ is the zero-mean complex Gaussian distribution with variance $\sigma^{2}$. $\mathbf{0}_{N\times 1}$ denotes an $N\times 1$ all-zero vector. $\mathbb{E}[\cdot]$ is the expectation operator and $\mathrm{Pr}\{ \mathcal{E} \}$ denotes probability of an event $\mathcal{E}$. Finally, $\mathsf{j}=\sqrt{-1}$. 
\section{Preliminaries}\label{sec2}
Consider a $k$-sparse $N\times 1$ complex vector $\mathbf{x}$ defined over a support set $\Lambda\subseteq [N]$. The cardinality of the support set is defined by $|\Lambda|=k$. We use $\mathbf{A}$ to denote an $M\times N$ CS matrix with $M<N$ and $\mathbf{y}$ to denote the $M$ noisy CS measurements. The noise in the measurements is modeled as an 
 $M\times 1$ Gaussian random vector $\mathbf{v}$ whose entries are independently distributed as  $\mathcal{CN}(0,\sigma^{2})$. Thus, 
\begin{equation}\label{eqn:measurements}
    \mathbf{y}=\mathbf{A}\mathbf{x}+\mathbf{v}.
\end{equation}
For the $j^{\mathrm{th}}$ column of $\mathbf{A}$, we define its column norm as $d_j=\|\mathbf{a}_j\|_2$. We also define $d_{\mathrm{max}} = \underset{j\in[N]}{\max}~d_{j}$ and $d_{\mathrm{min}} = \underset{j\in[N]}{\min}~d_{j}$. We assume that the Frobenius norm of the CS matrix is fixed to $\sqrt{N}$, equivalently $\sum_{j=1}^{N}d_j^2=N$. Under this assumption, \begin{equation}\label{eqn:dmin}
0 < d_{\min}\leq 1,
\end{equation}
\begin{equation}\label{eqn:dmax}
  1 \leq d_{\max} < \sqrt{N}.
\end{equation}
When $d_\mathrm{min}=d_\mathrm{max}$, we observe that $\mathbf{A}$ becomes a column-normalized matrix, i.e., a matrix with equal column norms. 
\par The mutual coherence $\mu$ of $\mathbf{A}$ is a measure of its quality for sparse recovery and it is defined as
\begin{equation}\label{eqn:coherence}
    \mu = \underset{\{(j, \ell): j\neq \ell, j\in[N], \ell\in[N]\}}{\max} ~\frac{|\mathbf{a}_{j}^{*}\mathbf{a}_{\ell}|}{d_{j}d_{\ell}}.
\end{equation}
A CS matrix with a small $\mu$ results in better support recovery and in a smaller error in the estimate of the sparse vector \cite{ben2010coherence}.
\par The OMP can estimate $\mathbf{x}$ from the CS measurements $\mathbf{y}$ in \eqref{eqn:measurements}, with a known sparsity level $k$, in $k$ iterations \cite{ben2010coherence}. In each iteration of the OMP, one column of $\mathbf{A}$ is selected and an estimate of the entries of the sparse vector corresponding to the selected columns is obtained. Let $\Lambda^{i}$ denote the estimated support and $\hat{\mathbf{x}}^i$ denote the sparse vector estimated in the $i^{\mathrm{th}}$ iteration of the OMP algorithm. Also, let $\hat{\mathbf{x}}^i_{\Lambda^{i}}$ be a subvector of $\hat{\mathbf{x}}^i$ indexed by $\Lambda^{i}$ and $\mathbf{A}_{\Lambda^{i}}$ be a matrix obtained by retaining columns of $\mathbf{A}$ indexed by $\Lambda^{i}$. The OMP algorithm is summarized in Algorithm \ref{alg:omp} when the CS matrix has different column norms \cite{gharavi1998fast}.
\begin{algorithm}[H]
\small
		\caption{OMP algorithm from \cite{gharavi1998fast}.} \label{alg:omp}
		\textbf{Input}: Sparsity level $k$, CS measurements $\mathbf{y}$ and CS matrix $\mathbf{A}$. 

  \textbf{Initialization}: $i=1$, $\Lambda^{0}=\emptyset$, $\hat{\mathbf{x}}^{0}=\mathbf{0}_{N\times 1}$ and $\mathbf{r}^{0}=\mathbf{y}$.

  \textbf{While} $i\leq k$ \textbf{do}:
	\begin{itemize}
		\item[\textbf{1}.] $s         =\underset{j\in[N]}{\mathrm{argmax}}   ~\dfrac{|\mathbf{a}_{j}^{*}\mathbf{r}^{i-1}|}{d_j}.
         $
		\item[\textbf{2}.] $\Lambda^i = \Lambda^{i-1}\cup s.$
            \item[\textbf{3}.] $\hat{\mathbf{x}}^{i}_{\Lambda^{i}} = (\mathbf{A}_{\Lambda^{i}}^{*}\mathbf{A}_{\Lambda^{i}})^{-1}\mathbf{A}_{\Lambda^{i}}^{*}\mathbf{y}.$
            \item[\textbf{4}.] $\mathbf{r}^{i} = \mathbf{y}-\mathbf{A}\hat{\mathbf{x}}^{i}.$
            \item[\textbf{5}.] $i\leftarrow  i + 1$.
		\end{itemize}
  \textbf{End While}
  
		\textbf{Output}: $\hat{\mathbf{x}}^{i}$.
\end{algorithm}
The key steps in Algorithm \ref{alg:omp} are \textit{i)} correctly detecting the support of $\mathbf{x}$ in step \textbf{1} and \textit{ii)} estimating the entries of $\mathbf{x}$, detected in step \textbf{1}, by solving a least-squares problem in step \textbf{3}. Hence, we closely analyze these two steps in Section \ref{sec3} and provide conditions that guarantee support recovery with high probability in step \textbf{1}. We also derive an upper bound on the MSE of the sparse vector estimated in step \textbf{3}.
\section{Coherence-Based Performance Guarantees}\label{sec3}
In this section, we first analyze $|\mathbf{a}_{j}^{*}\mathbf{r}^{i-1}|/d_j$ in step \textbf{1} of Algorithm \ref{alg:omp} to find the conditions on the CS matrix that lead to correct support recovery. Then, assuming that the support of $\mathbf{x}$ is detected correctly, we obtain an upper-bound on the MSE of the vector estimated in step \textbf{3} of the algorithm.
\subsection{Guarantees on successful support recovery}
\par Let $\bar{\mathbf{x}}^{i} = \mathbf{x} - \hat{\mathbf{x}}^{i}$ denote the difference between the original sparse vector $\mathbf{x}$ and its estimated version in the $i^{\mathrm{th}}$ OMP iteration. Hence, in the $i^{\mathrm{th}}$ iteration, the residue $\mathbf{r}^{i}$ is
\begin{equation}\label{eqn:reisd_i}
    \mathbf{r}^{i} = \mathbf{A}\bar{\mathbf{x}}^{i} + \mathbf{v},
\end{equation}
and the support recovery criterion in step \textbf{1} is equivalent to
\begin{equation}\label{eqn:expanded_supp}
    s         =\underset{j\in[N]}{\mathrm{argmax}} ~\dfrac{|\mathbf{a}_{j}^{*}\mathbf{A}\bar{\mathbf{x}}^{i-1} + \mathbf{a}_{j}^{*}\mathbf{v}|}{d_j}.
\end{equation}
As we observe from \eqref{eqn:expanded_supp}, the noise term $\mathbf{a}_{j}^{*}\mathbf{v}/d_j$ can lead to incorrect support detection. Therefore, to control support misdetection under noise, we first examine the event
\begin{equation}\label{eqn:event}
E = \left\lbrace \underset{j\in[N]}{\max}~\frac{|\mathbf{a}_{j}^{*}\mathbf{v}|}{d_j}<\rho \right\rbrace
\end{equation} 
with $\rho \coloneqq \sigma \sqrt{2(1+\alpha)\log{N}}$ and $\alpha\! >\! 0$ \cite{ben2010coherence}. In Lemma \ref{lm:event}, we provide a lower bound on the probability of the event $E$. A similar lemma was derived in \cite{ben2010coherence} for a real-valued CS problem with a column-normalized CS matrix, i.e., $d_{\mathrm{max}} = d_{\mathrm{min}}$. Our result, however, applies to a more general CS setting.
\begin{lemma}\label{lm:event}
    The probability of the event $E$ in \eqref{eqn:event} is lower bounded as
    \begin{equation}\label{eqn:lemma1}
        \mathrm{Pr}\!\left\lbrace E \right\rbrace \geq \left(1-\sqrt{\frac{2}{\pi}}.\sqrt{\frac{\sigma}{\rho}}\exp{\!\left(\!-\frac{\rho^2}{2\sigma^2}\right)}\right)^{2N}.
    \end{equation}
\end{lemma}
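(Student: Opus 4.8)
The plan is to exploit the fact that, although the columns of $\mathbf{A}$ have unequal norms, the \emph{normalized} inner products $w_j \coloneqq \mathbf{a}_j^{*}\mathbf{v}/d_j$ are all identically distributed. Since $\mathbf{v}$ has i.i.d.\ $\mathcal{CN}(0,\sigma^2)$ entries and $\mathbf{a}_j^{*}\mathbf{v}$ is a fixed linear combination of them, $\mathbf{a}_j^{*}\mathbf{v}\sim\mathcal{CN}(0,d_j^2\sigma^2)$, and hence $w_j\sim\mathcal{CN}(0,\sigma^2)$ for every $j\in[N]$, \emph{independently of the column norm} $d_j$. Writing $w_j=\mathrm{Re}(w_j)+\mathsf{j}\,\mathrm{Im}(w_j)$, circular symmetry gives that $\mathrm{Re}(w_j)$ and $\mathrm{Im}(w_j)$ are each $\mathcal{N}(0,\sigma^2/2)$. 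The whole effect of the unequal column norms is therefore absorbed by the division by $d_j$ in the definition of $E$, which is why the bound in \eqref{eqn:lemma1} contains no dependence on $d_{\min}$ or $d_{\max}$.

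First I would reduce $E$ to a collection of one-dimensional slab events. Because $|w_j|^2=\mathrm{Re}(w_j)^2+\mathrm{Im}(w_j)^2$, the conditions $|\mathrm{Re}(w_j)|<\rho/\sqrt{2}$ and $|\mathrm{Im}(w_j)|<\rho/\sqrt{2}$ together imply $|w_j|<\rho$. Hence
\begin{equation}\label{eqn:sketch-containment}
E \;\supseteq\; \bigcap_{j=1}^{N}\Bigl(\{\,|\mathrm{Re}(w_j)|<\rho/\sqrt{2}\,\}\cap\{\,|\mathrm{Im}(w_j)|<\rho/\sqrt{2}\,\}\Bigr),
\end{equation}
an intersection of $2N$ symmetric slabs defined by the jointly Gaussian, zero-mean family $\{\mathrm{Re}(w_j),\mathrm{Im}(w_j)\}_{j\in[N]}$.

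Next I would apply \v{S}id\'ak's inequality (the Gaussian correlation inequality specialized to symmetric slabs) to the right-hand side of \eqref{eqn:sketch-containment}, lower-bounding the probability of the intersection by the product of the $2N$ individual slab probabilities. This is the step I expect to be the main obstacle: the $w_j$ are not independent, since they all share the single noise vector $\mathbf{v}$, so a plain union bound would only yield a $1-2N(\cdot)$ bound rather than the product form stated in the lemma; \v{S}id\'ak's inequality is exactly what converts the dependent intersection into a clean product, mirroring the argument used in \cite{ben2010coherence}. Each factor equals $\mathrm{Pr}\{|\mathcal{N}(0,\sigma^2/2)|<\rho/\sqrt{2}\}=1-2Q(\rho/\sigma)$, where $Q(x)=\mathrm{Pr}\{Z>x\}$ for $Z\sim\mathcal{N}(0,1)$, so $\mathrm{Pr}\{E\}\ge(1-2Q(\rho/\sigma))^{2N}$.

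Finally, I would use the Gaussian tail bound $Q(x)\le \frac{1}{x\sqrt{2\pi}}e^{-x^2/2}$, which gives $2Q(\rho/\sigma)\le \sqrt{2/\pi}\,(\sigma/\rho)\,e^{-\rho^2/(2\sigma^2)}$. Since $\rho=\sigma\sqrt{2(1+\alpha)\log N}\ge\sigma$ for all $N\ge 2$ and $\alpha>0$, we have $\sigma/\rho\le\sqrt{\sigma/\rho}\le 1$, and as $t\mapsto(1-t)^{2N}$ is decreasing on $[0,1]$, replacing $2Q(\rho/\sigma)$ by the larger quantity $\sqrt{2/\pi}\,\sqrt{\sigma/\rho}\,e^{-\rho^2/(2\sigma^2)}$ only weakens the bound, which yields exactly \eqref{eqn:lemma1}.
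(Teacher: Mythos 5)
Your proposal is correct and follows essentially the same route as the paper's proof: reduce the modulus condition to $2N$ symmetric slab events on the jointly Gaussian real and imaginary parts of $\mathbf{a}_j^{*}\mathbf{v}/d_j$, apply \v{S}id\'ak's inequality to get the product form, and finish with the standard Gaussian tail bound as in \cite{ben2010coherence}. Your final remark justifying the $\sqrt{\sigma/\rho}$ factor (via $\sigma/\rho\leq\sqrt{\sigma/\rho}\leq 1$ and monotonicity of $(1-t)^{2N}$) even spells out a step the paper leaves implicit.
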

\begin{proof}
    We use $\Re(\mathbf{a}_{j}^{*}\mathbf{v})$ and $\Im(\mathbf{a}_{j}^{*}\mathbf{v})$ to denote the real and imaginary parts of $\mathbf{a}_{j}^{*}\mathbf{v}$. Since the entries of $\mathbf{v}$ are independently distributed as $\mathcal{CN}(0,\sigma^2)$, the random variables $\{\Re(\mathbf{a}_{j}^{*}\mathbf{v})/d_j\}$ and $\{\Im(\mathbf{a}_{j}^{*}\mathbf{v})/d_j\}$ $\forall j\in[N]$ are jointly Gaussian and each distributed as $\mathcal{N}(0,\sigma^2/2)$. For any $\kappa \in \mathbb{C}$, we notice that $|\kappa|< \rho$ whenever $\Re(\kappa)<\rho/\sqrt{2}$ and $\Im(\kappa)<\rho/\sqrt{2}$. Using this observation in \eqref{eqn:event}, we can write 
    \begin{align*}\label{eqn:boundE}
    \mathrm{Pr}\!\left\lbrace E \right\rbrace &\geq \mathrm{Pr}\!\left\lbrace \underset{j\in[N]}{\max}\frac{|\Im(\mathbf{a}_{j}^{*}\mathbf{v})|}{d_j}<\!\frac{\rho}{\sqrt{2}}\cap\! \underset{j\in[N]}{\max}\frac{|\Re(\mathbf{a}_{j}^{*}\mathbf{v})|}{d_j}<\!\frac{\rho}{\sqrt{2}}\right\rbrace\\
    &\overset{\mathrm{(a)}}{\geq}\!\prod_{j\in[N]}\!\mathrm{Pr}\!\left\lbrace \frac{|\Im(\mathbf{a}_{j}^{*}\mathbf{v})|}{d_j}<\!\frac{\rho}{\sqrt{2}}\right\rbrace\mathrm{Pr}\!\left\lbrace\frac{|\Re(\mathbf{a}_{j}^{*}\mathbf{v})|}{d_j}<\!\frac{\rho}{\sqrt{2}}\right\rbrace\!.
\end{align*}
In $\mathrm{(a)}$, we use \v{S}id\'ak's lemma \cite[Theorem 1]{vsidak1967rectangular}. Now, replacing the probability values in the right-hand of the above equation with their upper limits similar to \cite{ben2010coherence}, we obtain \eqref{eqn:lemma1}.
\end{proof}
\par Next, we obtain a condition that guarantees the successful recovery of the support of $\mathbf{x}$ using the OMP when the event $E$ in \eqref{eqn:event} occurs. This condition depends on the magnitude of the nonzero entries of the  sparse vector $\mathbf{x}$, the noise variance $\sigma^2$, the $\ell_2$-norm of the columns of the CS matrix, i.e., $d_j$ $\forall j\in [N]$, and the coherence of the CS matrix $\mu$ \eqref{eqn:coherence}.
\begin{theorem}\label{theorem1}
Let $x_{\mathrm{max}} = \underset{j\in\Lambda}{\max}~|x_j|$ and $x_{\mathrm{min}} = \underset{j\in\Lambda}{\min}~|x_j|$. If 
\begin{equation}\label{eqn:cond_min}
    d_{\mathrm{min}}x_{\mathrm{min}} - (2k-1)\mu d_{\mathrm{max}}x_{\mathrm{min}} \geq 2\rho
\end{equation}
and the event $E$ in \eqref{eqn:event} occurs, the OMP algorithm successfully recovers the support of $\mathbf{x}$.
\end{theorem}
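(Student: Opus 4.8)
The plan is to condition on the noise event $E$ of \eqref{eqn:event} and then prove by induction on the iteration index $i$ that step \textbf{1} of Algorithm~\ref{alg:omp} always selects an index in the true support $\Lambda$. Assume $\Lambda^{i-1}\subseteq\Lambda$ with $|\Lambda^{i-1}|=i-1<k$ (for $i=1$ this is just $\Lambda^{0}=\emptyset$). Since $\hat{\mathbf x}^{i-1}$ is supported on $\Lambda^{i-1}$, the error vector $\bar{\mathbf x}^{i-1}=\mathbf x-\hat{\mathbf x}^{i-1}$ is supported on $\Lambda$ and coincides with $\mathbf x$ on the nonempty set $\Lambda\setminus\Lambda^{i-1}$; hence, writing $\bar x^{i-1}_{\mathrm{max}}:=\max_{j\in[N]}|\bar x^{i-1}_{j}|$, we obtain the crucial inequality $\bar x^{i-1}_{\mathrm{max}}\geq x_{\mathrm{min}}$. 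Using \eqref{eqn:reisd_i} together with $\mathrm{supp}(\bar{\mathbf x}^{i-1})\subseteq\Lambda$, I would write $\mathbf a_\ell^{*}\mathbf r^{i-1}=\sum_{p\in\Lambda}\bar x^{i-1}_{p}\,\mathbf a_\ell^{*}\mathbf a_{p}+\mathbf a_\ell^{*}\mathbf v$ for every $\ell\in[N]$, and reduce the claim to showing $\max_{j\in\Lambda}|\mathbf a_j^{*}\mathbf r^{i-1}|/d_j>\max_{\ell\notin\Lambda}|\mathbf a_\ell^{*}\mathbf r^{i-1}|/d_\ell$. This inequality forces the $\mathrm{argmax}$ in step \textbf{1} into $\Lambda$; moreover, since the least-squares residual satisfies $\mathbf a_j^{*}\mathbf r^{i-1}=0$ for $j\in\Lambda^{i-1}$, a strictly positive left-hand side guarantees the selected index is new, so $\Lambda^{i}$ strictly grows inside $\Lambda$ and $\Lambda^{k}=\Lambda$ after $k$ iterations.

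To get the lower bound I would take $j^{\star}\in\Lambda$ with $|\bar x^{i-1}_{j^{\star}}|=\bar x^{i-1}_{\mathrm{max}}$, isolate the diagonal term $\bar x^{i-1}_{j^{\star}}\mathbf a_{j^{\star}}^{*}\mathbf a_{j^{\star}}=\bar x^{i-1}_{j^{\star}}d_{j^{\star}}^{2}$, bound the cross terms by $|\mathbf a_{j^{\star}}^{*}\mathbf a_{p}|/d_{j^{\star}}\leq\mu d_{p}\leq\mu d_{\mathrm{max}}$ (from \eqref{eqn:coherence}), and bound the noise by $|\mathbf a_{j^{\star}}^{*}\mathbf v|/d_{j^{\star}}<\rho$ (from $E$). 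With $d_{j^{\star}}\geq d_{\mathrm{min}}$, $|\Lambda|=k$, and $|\bar x^{i-1}_{p}|\leq\bar x^{i-1}_{\mathrm{max}}$ this gives
\[
\frac{|\mathbf a_{j^{\star}}^{*}\mathbf r^{i-1}|}{d_{j^{\star}}}\;\geq\;\bar x^{i-1}_{\mathrm{max}}\bigl(d_{\mathrm{min}}-(k-1)\mu d_{\mathrm{max}}\bigr)-\rho .
\]
The same expansion for $\ell\notin\Lambda$ has no diagonal term, so all $k$ summands are cross terms, and the identical coherence and noise bounds give
\[
\frac{|\mathbf a_{\ell}^{*}\mathbf r^{i-1}|}{d_{\ell}}\;\leq\;k\mu d_{\mathrm{max}}\,\bar x^{i-1}_{\mathrm{max}}+\rho .
\]
Subtracting, the selection at iteration $i$ is correct provided $\bar x^{i-1}_{\mathrm{max}}\bigl(d_{\mathrm{min}}-(2k-1)\mu d_{\mathrm{max}}\bigr)\geq2\rho$. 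Now \eqref{eqn:cond_min} reads $x_{\mathrm{min}}\bigl(d_{\mathrm{min}}-(2k-1)\mu d_{\mathrm{max}}\bigr)\geq2\rho>0$, and since $x_{\mathrm{min}}>0$ this forces $d_{\mathrm{min}}-(2k-1)\mu d_{\mathrm{max}}>0$; the factor multiplying $\bar x^{i-1}_{\mathrm{max}}$ is therefore positive, so the inductive bound $\bar x^{i-1}_{\mathrm{max}}\geq x_{\mathrm{min}}$ can be substituted and the induction closes.

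The step I expect to be the main obstacle is the lower bound on the correlations with correct atoms. A naive triangle-inequality estimate leaves a factor $\|\bar{\mathbf x}^{i-1}\|_{1}$ on both sides of the comparison, and this quantity is not controlled under noise, because the entries of $\bar{\mathbf x}^{i-1}$ indexed by $\Lambda^{i-1}$ are least-squares reconstruction errors that need not be small. The fix — and the heart of the argument — is to work with the $\ell_{\infty}$ norm of $\bar{\mathbf x}^{i-1}$ instead: then $\bar x^{i-1}_{\mathrm{max}}$ appears as a common multiplicative factor on both sides and cancels, so no upper bound on $\bar{\mathbf x}^{i-1}$ is ever needed, only the trivial lower bound $\bar x^{i-1}_{\mathrm{max}}\geq x_{\mathrm{min}}$ coming from the coordinates OMP has not yet touched. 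A second point to handle carefully is the asymmetric roles of $d_{\mathrm{min}}$ and $d_{\mathrm{max}}$ in \eqref{eqn:cond_min}: each correlation is normalized by its own column norm, whereas the interfering term $\mathbf a_\ell^{*}\mathbf a_p$ carries the norm $d_p$ of the \emph{other} column, which is then bounded by $d_{\mathrm{max}}$. Setting $d_{\mathrm{min}}=d_{\mathrm{max}}=1$ recovers the condition $x_{\mathrm{min}}\bigl(1-(2k-1)\mu\bigr)\geq2\rho$ of \cite{ben2010coherence}.
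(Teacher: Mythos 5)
Your proof is correct and takes essentially the same approach as the paper: condition on the event $E$ of \eqref{eqn:event} and compare the column-norm-normalized correlations of in-support and out-of-support atoms via the coherence bounds $|\mathbf{a}_j^{*}\mathbf{a}_p|/d_j\leq\mu d_{\mathrm{max}}$, which yields exactly the threshold \eqref{eqn:cond_min}. The only difference is that you write out the inductive step explicitly (using $\bar{x}^{i-1}_{\mathrm{max}}\geq x_{\mathrm{min}}$ and the least-squares orthogonality $\mathbf{a}_j^{*}\mathbf{r}^{i-1}=0$ for $j\in\Lambda^{i-1}$ to ensure a fresh index), whereas the paper proves only the first iteration and delegates that induction to the argument of \cite{ben2010coherence}.
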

\begin{proof}
Our proof follows a similar structure as the one in \cite[Lemma 3]{ben2010coherence}. We begin with the first iteration of Algorithm \ref{alg:omp} and analyze step \textbf{1}. Then, by induction, we show that when the event $E$ occurs and $\mathbf{x}$ satisfies \eqref{eqn:cond_min}, the OMP correctly recovers the support of $\mathbf{x}$ after $k$ iterations.
\par Now, considering the first iteration of Algorithm \ref{alg:omp} and assuming that the event $E$ occurs, we find conditions for which the index of the selected column, i.e., the one that maximizes $|\mathbf{a}_{j}^{*}\mathbf{r}^{0}|/d_j$, belongs to the support $\Lambda$ of $\mathbf{x}$. Noting that $\mathbf{r}^0=\mathbf{y}$, we can equivalently write this condition as
\begin{equation}\label{eqn:detectOMP}
    \underset{j\in\Lambda}{\max}~\frac{|\mathbf{a}_{j}^{*}\mathbf{y}|}{d_j} > \underset{j\notin\Lambda}{\max}~\frac{|\mathbf{a}_{j}^{*}\mathbf{y}|}{d_j}.
\end{equation}
 For the right-hand side of \eqref{eqn:detectOMP}, under the event $E$, we have
\begin{align}\label{eqn:NotInLambda}
    \underset{j\notin\Lambda}{\max}~\frac{|\mathbf{a}_{j}^{*}\mathbf{y}|}{d_{j}} &= \underset{j\notin\Lambda}{\max}~\frac{|\mathbf{a}_{j}^{*}\mathbf{v} + \sum_{i\in\Lambda}x_{i}\mathbf{a}_{j}^{*}\mathbf{a}_{i}|}{d_{j}}\nonumber\\
    &\overset{\mathrm{(b)}}{\leq} \underset{j\notin\Lambda}{\max}~\frac{|\mathbf{a}_{j}^{*}\mathbf{v}|}{d_{j}} + \underset{j\notin\Lambda}{\max}~\frac{\sum_{i\in\Lambda}|x_{i}\mathbf{a}_{j}^{*}\mathbf{a}_{i}|}{d_{j}} \nonumber \\
    &\overset{\mathrm{(c)}}{<} \rho + k\mu d_{\mathrm{max}}x_{\mathrm{max}}.
\end{align}
In $\mathrm{(b)}$, we use the triangular inequality. In $\mathrm{(c)}$, we use the assumption that the event $E$ occurs and the fact that
\begin{equation*}
    \frac{|x_{i}\mathbf{a}_{j}^{*}\mathbf{a}_{i}|}{d_{j}} = |x_{i}|\frac{|\mathbf{a}_{j}^{*}\mathbf{a}_{i}|d_{i}}{d_{j}d_{i}} \overset{\eqref{eqn:coherence}}{\leq} d_{\mathrm{max}}x_{\mathrm{max}}\mu.
\end{equation*}
Under the event $E$, for the left-hand side of \eqref{eqn:detectOMP}, we have
\begin{align}\label{eqn:InLambda}
    \underset{j\in\Lambda}{\max}~\frac{|\mathbf{a}_{j}^{*}\mathbf{y}|}{d_{j}} &= \underset{j\in\Lambda}{\max}~\frac{|d_{j}^{2}x_{j}+\mathbf{a}_{j}^{*}\mathbf{v} + \sum_{i\in\Lambda\backslash\{j\}}x_{i}\mathbf{a}_{j}^{*}\mathbf{a}_{i}|}{d_{j}}\nonumber \\ &\geq d_{\mathrm{min}}x_{\mathrm{max}} - \underset{j\in\Lambda}{\max}~\frac{\left\lvert\mathbf{a}_{j}^{*}\mathbf{v}+\sum_{i\in\Lambda\backslash\{j\}}x_{i}\mathbf{a}_{j}^{*}\mathbf{a}_{i}\right\rvert}{d_{j}} \nonumber \\
    &> d_{\mathrm{min}}x_{\mathrm{max}} - \rho - (k-1)\mu d_{\mathrm{max}}x_{\mathrm{max}}.
\end{align}
Note that in the first inequality, we use $\underset{j\in\Lambda}{\max}~|d_{j}x_{j}|\geq d_{\mathrm{min}}x_{\mathrm{max}}$. Hence, from \eqref{eqn:NotInLambda} and \eqref{eqn:InLambda} we can write 
\begin{equation}\label{eqn:detectOMP2}
    \underset{j\in\Lambda}{\max}\frac{|\mathbf{a}_{j}^{*}\mathbf{y}|}{d_{j}}\! >\! d_{\mathrm{min}}x_{\mathrm{max}} - (2k-1)\mu d_{\mathrm{max}}x_{\mathrm{max}} - 2\rho + \underset{j\notin\Lambda}{\max}\frac{|\mathbf{a}_{j}^{*}\mathbf{y}|}{d_{j}}.
\end{equation}
From \eqref{eqn:detectOMP2}, we observe that under the event $E$, when
\begin{equation}\label{eqn:cond_max}
    d_{\mathrm{min}}x_{\mathrm{max}} - (2k-1)\mu d_{\mathrm{max}}x_{\mathrm{max}} \geq 2\rho,
\end{equation}
equation \eqref{eqn:detectOMP} holds and therefore the selected entry in the first iteration of Algorithm \ref{alg:omp} will belong to the support of $\mathbf{x}$. We note that \eqref{eqn:cond_min} implies \eqref{eqn:cond_max} and use an induction-based technique in the proof of \cite[Theorem. 4]{ben2010coherence}. We then show that under \eqref{eqn:cond_min} and the event $E$, Algorithm \ref{alg:omp} will successfully recover the entire support $\Lambda$ of $\mathbf{x}$ after $k$ iterations. 
\end{proof}
\par The condition in \eqref{eqn:cond_min} gives a lower bound on the weakest coefficient of $\mathbf{x}$ such that its support can be successfully identified with the OMP. This bound applies to general CS matrices whose columns may have unequal norms. For the special case when $d_{\mathrm{max}}=d_{\mathrm{min}}=1$, the bound in \eqref{eqn:cond_min} is exactly the same as the one derived in \cite[Theorem 4]{ben2010coherence}. 
\par The OMP can identify weaker coefficients when using CS matrices that have a small $d_{\mathrm{max}}/d_{\mathrm{min}}$. This observation follows by rewriting \eqref{eqn:cond_min} as 
\begin{equation}
    x_{\mathrm{min}} {\geq} \frac{2\rho}{d_{\mathrm{min}} - (2k-1)\mu d_{\mathrm{max}}} = \frac{2\rho/d_{\mathrm{min}}}{1 - \mu(2k-1) (d_{\mathrm{max}}/d_{\mathrm{min}})}.
\end{equation}
We observe from \eqref{eqn:dmin} and \eqref{eqn:dmax} that the smallest possible $d_{\mathrm{max}}/d_{\mathrm{min}}$ is $1$. 
\begin{remark}\label{remark1}
For the noiseless CS problem, i.e., $\sigma = 0$, Algorithm \ref{alg:omp} correctly recovers the support $\Lambda$ of $\mathbf{x}$ if
\begin{equation}\label{eqn:noiseless}
    k\leq \frac{1}{2}\left(1+\frac{d_{\mathrm{min}}}{ \mu d_{\mathrm{max}}}\right).
\end{equation}
When $d_{\mathrm{max}}\!=d_{\mathrm{min}}$, \eqref{eqn:noiseless} reduces to the one derived in \cite{ben2010coherence}.
\end{remark}
\begin{proof}
    The proof follows from \eqref{eqn:cond_min} by setting $\sigma=0$.
\end{proof}
\par Next, by assuming that the conditions in \eqref{eqn:cond_min} and \eqref{eqn:lemma1} are satisfied, we obtain an upper bound on the MSE of the  vector reconstructed in step \textbf{3} of Algorithm \ref{alg:omp}.
\subsection{Guarantees on the robustness of the reconstruction to noise}
To derive an upper bound on the MSE of the estimated vector in step \textbf{3} of Algorithm \ref{alg:omp}, we first examine the eigenvalues of $(\mathbf{A}_{\Lambda}^{*}\mathbf{A}_{\Lambda})^{-1}$ in Lemma \ref{lemma2}.
\begin{lemma}\label{lemma2}
    Let $\lambda_{\mathrm{max}}\left((\mathbf{A}_{\Lambda}^{*}\mathbf{A}_{\Lambda})^{-1}\right)$ denote the largest eigenvalue of the positive semidefinite matrix $(\mathbf{A}_{\Lambda}^{*}\mathbf{A}_{\Lambda})^{-1}$. Then, 
    \begin{equation}\label{eqn:lemma2}
        \lambda_{\mathrm{max}}\left((\mathbf{A}_{\Lambda}^{*}\mathbf{A}_{\Lambda})^{-1}\right) \leq \frac{1}{d_{\mathrm{min}}\left(d_{\mathrm{min}} - (k-1)\mu d_{\mathrm{max}}\right)}.
    \end{equation}
\end{lemma}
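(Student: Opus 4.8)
The plan is to bound $\lambda_{\mathrm{max}}\big((\mathbf{A}_{\Lambda}^{*}\mathbf{A}_{\Lambda})^{-1}\big)$ by the reciprocal of the smallest eigenvalue of the Gram matrix $\mathbf{G}\coloneqq\mathbf{A}_{\Lambda}^{*}\mathbf{A}_{\Lambda}$, and then to lower bound $\lambda_{\mathrm{min}}(\mathbf{G})$ via Gershgorin's circle theorem together with the coherence bound. The diagonal entries of $\mathbf{G}$ are $G_{jj}=d_{j}^{2}$ for $j\in\Lambda$, while from the definition of $\mu$ in \eqref{eqn:coherence} each off-diagonal entry satisfies $|G_{j\ell}|=|\mathbf{a}_{j}^{*}\mathbf{a}_{\ell}|\leq\mu d_{j}d_{\ell}\leq\mu d_{j}d_{\mathrm{max}}$. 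Hence the $j$-th Gershgorin disk is centered at $d_{j}^{2}$ with radius $R_{j}\coloneqq\sum_{\ell\in\Lambda\setminus\{j\}}|G_{j\ell}|\leq(k-1)\mu d_{j}d_{\mathrm{max}}$, so every eigenvalue $\lambda$ of $\mathbf{G}$ satisfies $\lambda\geq d_{j}^{2}-(k-1)\mu d_{j}d_{\mathrm{max}}=d_{j}\big(d_{j}-(k-1)\mu d_{\mathrm{max}}\big)$ for some $j\in\Lambda$.

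It then remains to minimize $g(d)\coloneqq d\big(d-(k-1)\mu d_{\mathrm{max}}\big)$ over $d\in[d_{\mathrm{min}},d_{\mathrm{max}}]$. Since $\mathbf{x}$ is $k$-sparse over $\Lambda$ with $|\Lambda|=k$ we have $x_{\mathrm{min}}>0$, so the standing hypothesis \eqref{eqn:cond_min} gives $d_{\mathrm{min}}-(2k-1)\mu d_{\mathrm{max}}\geq 0$, and therefore $d_{\mathrm{min}}-(k-1)\mu d_{\mathrm{max}}>0$. Consequently, for every $j\in\Lambda$ both factors $d_{j}$ and $d_{j}-(k-1)\mu d_{\mathrm{max}}$ are nonnegative and no smaller than their values at $d_{\mathrm{min}}$, whence $g(d_{j})\geq g(d_{\mathrm{min}})=d_{\mathrm{min}}\big(d_{\mathrm{min}}-(k-1)\mu d_{\mathrm{max}}\big)>0$. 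Combining this with the Gershgorin estimate yields $\lambda_{\mathrm{min}}(\mathbf{G})\geq d_{\mathrm{min}}\big(d_{\mathrm{min}}-(k-1)\mu d_{\mathrm{max}}\big)$, and since $\mathbf{G}$ is then positive definite this is equivalent to $\lambda_{\mathrm{max}}(\mathbf{G}^{-1})=1/\lambda_{\mathrm{min}}(\mathbf{G})\leq 1/\big(d_{\mathrm{min}}(d_{\mathrm{min}}-(k-1)\mu d_{\mathrm{max}})\big)$, which is \eqref{eqn:lemma2}.

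The only delicate point is guaranteeing that the denominator is strictly positive, so that both the passage to the inverse and the monotonicity argument for $g$ are valid; this is precisely where the hypothesis \eqref{eqn:cond_min} (equivalently, a sparsity-versus-coherence budget) is used, since without $d_{\mathrm{min}}>(k-1)\mu d_{\mathrm{max}}$ the Gershgorin disks could reach the origin and $\mathbf{G}$ need not even be invertible. Everything else is a routine application of Gershgorin's theorem and the inequality $|\mathbf{a}_{j}^{*}\mathbf{a}_{\ell}|\leq\mu d_{j}d_{\ell}$.
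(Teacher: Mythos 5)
Your proof is correct and follows essentially the same route as the paper's: invert the smallest eigenvalue of the Gram matrix $\mathbf{A}_{\Lambda}^{*}\mathbf{A}_{\Lambda}$ and lower bound it via the Gershgorin circle theorem combined with the coherence bound $|\mathbf{a}_{j}^{*}\mathbf{a}_{\ell}|\leq\mu d_{j}d_{\ell}$. Your explicit justification of the final step---that the map $d\mapsto d\bigl(d-(k-1)\mu d_{\mathrm{max}}\bigr)$ is minimized at $d_{\mathrm{min}}$ and that the denominator is strictly positive under \eqref{eqn:cond_min}, so the inverse exists---is a point the paper leaves implicit, and it is a worthwhile addition rather than a deviation in approach.
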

\begin{proof}
    If $\lambda_{\mathrm{min}}\left(\mathbf{A}_{\Lambda}^{*}\mathbf{A}_{\Lambda}\right)$ is the smallest eigenvalue of $\mathbf{A}_{\Lambda}^{*}\mathbf{A}_{\Lambda}$, we can write
    \begin{equation}\label{eqn:eig_maxmin}
        \lambda_{\mathrm{max}}\left((\mathbf{A}_{\Lambda}^{*}\mathbf{A}_{\Lambda})^{-1}\right) = \left(\lambda_{\mathrm{min}}\left(\mathbf{A}_{\Lambda}^{*}\mathbf{A}_{\Lambda}\right)\right)^{-1}.
    \end{equation}
    Now, using the Gershgorin circle theorem \cite[Theorem 7.2.1]{golub2013matrix}, we can bound the eigenvalues $\lambda_i,~ \forall i\in[k]$ of $\mathbf{A}_{\Lambda}^{*}\mathbf{A}_{\Lambda}$ as
\begin{equation}
    \left|\lambda_i-\|\mathbf{a}_i\|_2^2\right|\leq \sum\limits_{\{j\in\Lambda,j\neq i\}} |\mathbf{a}_{i}^{*}\mathbf{a}_j|.
\end{equation}
Therefore, we can write
\begin{align}\label{eqn:eig_min}
    \lambda_{\mathrm{min}}\left(\mathbf{A}_{\Lambda}^{*}\mathbf{A}_{\Lambda}\right) &\geq \underset{i\in\Lambda}{\min}\{d_{i}^{2}-\sum\limits_{\{j\in\Lambda,j\neq i\}} |\mathbf{a}_{i}^{*}\mathbf{a}_j|\}\nonumber\\
    &\overset{\eqref{eqn:coherence}}{\geq} \underset{i\in\Lambda}{\min}~ d_{i}\left(d_{i}-\sum\limits_{\{j\in\Lambda,j\neq i\}} d_j\mu\right)\nonumber\\
    &\geq d_{\mathrm{min}}\left(d_{\mathrm{min}}-(k-1)d_{\mathrm{max}}\mu\right).
\end{align}
Finally, combining \eqref{eqn:eig_min} and \eqref{eqn:eig_maxmin} completes the proof.
\end{proof}
Now, using Lemma \ref{lemma2}, we obtain an upper-bound for the MSE of the vector recovered with Algorithm \ref{alg:omp}.
\begin{theorem}\label{theorem2}
    Let $\hat{\mathbf{x}}$ denote the vector reconstructed using the OMP after $k$ iterations. If \eqref{eqn:lemma1} and \eqref{eqn:cond_min} hold, we can write
    \begin{equation}\label{eqn:theorem2}
        \|\hat{\mathbf{x}}-\mathbf{x}\|_2^2 \leq \left(\frac{d_{\mathrm{max}}}{d_{\mathrm{min}}}\right)^{2}\frac{k\rho^2}{\left(d_{\mathrm{min}} - (k-1)\mu d_{\mathrm{max}}\right)^2}.
    \end{equation}
\end{theorem}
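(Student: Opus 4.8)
The plan is to reduce everything to the least-squares step on the \emph{correct} support. First I would invoke Theorem~\ref{theorem1} together with Lemma~\ref{lm:event}: condition \eqref{eqn:cond_min} is assumed, and by Lemma~\ref{lm:event} the event $E$ in \eqref{eqn:event} occurs with probability at least the right-hand side of \eqref{eqn:lemma1}; on that event Theorem~\ref{theorem1} guarantees that the OMP identifies the true support, i.e.\ $\Lambda^{k}=\Lambda$. From this point on I would condition on $E$ and argue deterministically.

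Next I would derive a closed form for the reconstruction error on $E$. In the $k$-th iteration, step~\textbf{3} of Algorithm~\ref{alg:omp} gives $\hat{\mathbf{x}}_{\Lambda}=(\mathbf{A}_{\Lambda}^{*}\mathbf{A}_{\Lambda})^{-1}\mathbf{A}_{\Lambda}^{*}\mathbf{y}$ and $\hat{x}_{j}=0$ for $j\notin\Lambda$; since $\mathbf{x}$ is supported on $\Lambda$ we have $\mathbf{y}=\mathbf{A}_{\Lambda}\mathbf{x}_{\Lambda}+\mathbf{v}$, so substituting yields $\hat{\mathbf{x}}_{\Lambda}-\mathbf{x}_{\Lambda}=(\mathbf{A}_{\Lambda}^{*}\mathbf{A}_{\Lambda})^{-1}\mathbf{A}_{\Lambda}^{*}\mathbf{v}$. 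Because $\mathbf{x}$ and $\hat{\mathbf{x}}$ both vanish outside $\Lambda$, this gives $\|\hat{\mathbf{x}}-\mathbf{x}\|_{2}^{2}=\|(\mathbf{A}_{\Lambda}^{*}\mathbf{A}_{\Lambda})^{-1}\mathbf{A}_{\Lambda}^{*}\mathbf{v}\|_{2}^{2}$.

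Then I would separate a ``conditioning'' factor from a ``noise'' factor. By submultiplicativity of the spectral norm and the fact that $(\mathbf{A}_{\Lambda}^{*}\mathbf{A}_{\Lambda})^{-1}$ is Hermitian positive definite, $\|\hat{\mathbf{x}}-\mathbf{x}\|_{2}^{2}\le\big(\lambda_{\mathrm{max}}((\mathbf{A}_{\Lambda}^{*}\mathbf{A}_{\Lambda})^{-1})\big)^{2}\,\|\mathbf{A}_{\Lambda}^{*}\mathbf{v}\|_{2}^{2}$; Lemma~\ref{lemma2} bounds the first factor. For the second, I would write $\|\mathbf{A}_{\Lambda}^{*}\mathbf{v}\|_{2}^{2}=\sum_{j\in\Lambda}|\mathbf{a}_{j}^{*}\mathbf{v}|^{2}=\sum_{j\in\Lambda}d_{j}^{2}\big(|\mathbf{a}_{j}^{*}\mathbf{v}|/d_{j}\big)^{2}$ and use the event $E$ together with $d_{j}\le d_{\mathrm{max}}$ to bound each of the $k$ summands by $d_{\mathrm{max}}^{2}\rho^{2}$, so $\|\mathbf{A}_{\Lambda}^{*}\mathbf{v}\|_{2}^{2}< k\,d_{\mathrm{max}}^{2}\rho^{2}$. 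Multiplying the two estimates and pulling out the $d_{\mathrm{max}}/d_{\mathrm{min}}$ factor yields \eqref{eqn:theorem2}.

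All the arithmetic here is elementary; the only thing that requires care is the logical status of the claim. The bound is deterministic only once we condition on $E$, so the role of \eqref{eqn:lemma1} is precisely to guarantee that $E$ — hence correct support identification, hence the clean least-squares expression for the error and the applicability of Lemma~\ref{lemma2} to $\mathbf{A}_{\Lambda}^{*}\mathbf{A}_{\Lambda}$ with the \emph{true} $\Lambda$ — holds with the stated probability. I expect this bookkeeping, rather than any computation, to be the subtle point.
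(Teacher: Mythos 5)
Your proposal is correct and follows essentially the same route as the paper: reduce to the least-squares error $(\mathbf{A}_{\Lambda}^{*}\mathbf{A}_{\Lambda})^{-1}\mathbf{A}_{\Lambda}^{*}\mathbf{v}$ on the true support, bound it by $\lambda_{\mathrm{max}}^{2}\sum_{j\in\Lambda}|\mathbf{a}_{j}^{*}\mathbf{v}|^{2}$, and then invoke Lemma~\ref{lemma2} together with the event $E$ to get $|\mathbf{a}_{j}^{*}\mathbf{v}|<d_{\mathrm{max}}\rho$. Your explicit remark that the bound is deterministic only conditioned on $E$ (with Theorem~\ref{theorem1} supplying correct support identification) is a slightly more careful statement of the probabilistic bookkeeping than the paper's wording, but the argument is the same.
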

\begin{proof}
    Since both $\mathbf{x}$ and $\hat{\mathbf{x}}$ are supported on $\Lambda$, we have
    \begin{align}\label{eqn:theorem2_prf}
        \|\hat{\mathbf{x}}-\mathbf{x}\|_2^2 &=\! \|(\mathbf{A}_{\Lambda}^{*}\mathbf{A}_{\Lambda})^{-1}\mathbf{A}_{\Lambda}^{*}\mathbf{y}-\mathbf{x}_{\Lambda}\|^2\nonumber
        =\! \|(\mathbf{A}_{\Lambda}^{*}\mathbf{A}_{\Lambda})^{-1}\mathbf{A}_{\Lambda}^{*}\mathbf{v}\|^2\!\nonumber\\
        &\leq \left[\lambda_{\mathrm{max}}\!\left((\mathbf{A}_{\Lambda}^{*}\mathbf{A}_{\Lambda})^{-1}\right)\right]^2\sum \mathop{}_{\mkern-5mu j\in\Lambda}|\mathbf{a}_{j}^{*}\mathbf{v}|^2 \nonumber\\
        &\leq \left[\lambda_{\mathrm{max}}\!\left((\mathbf{A}_{\Lambda}^{*}\mathbf{A}_{\Lambda})^{-1}\right)\right]^2 k(\, \mathrm{max}_j |\mathbf{a}_{j}^{*}\mathbf{v}|)^2.
    \end{align}
    We use \eqref{eqn:event} and \eqref{eqn:lemma2} in \eqref{eqn:theorem2_prf} to get \eqref{eqn:theorem2}. Finally, we notice that the bound in \eqref{eqn:theorem2} is tight when  $d_{\mathrm{max}}/d_{\mathrm{min}}$ approaches $1$.
\end{proof}
\vspace{-2,5mm}
\section{Simulations}\label{sec5}
We discuss our work in the context of CS-based sparse spatial channel estimation between a phased array comprising $N=32$ antennas and a single antenna receiver. The sparse wireless channel in this setup is modeled as an $N\times 1$ complex vector $\mathbf{x}$ with $k=2$ non-zero entries. The $k$ indices are chosen uniformly at random and the corresponding non-zero coefficients are sampled from $\mathcal{CN}(0,1)$ distribution. Then, the sparse vector is normalized to a unit norm.
\par We define $\mathbb{Q}=\{q: |q|=1\}$ to model the weights applied in a phased array. The channel measurements in our setting are acquired by applying $M<N$ distinct random circulant shifts of a codeword $\mathbf{f}\in \mathbb{Q}^N$ to a phased array. We use $\mathbf{F}\in \mathbb{Q}^{M \times N}$ to denote the matrix comprising the applied codewords, i.e., every row of $\mathbf{F}$ is a circularly shifted version of $\mathbf{f}$. This circulant shift-based measurement technique was discussed in \cite{myers2019falp,myers2018spatial} for channel estimation at millimeter wave frequencies. The CS matrix in this case can be expressed as $\mathbf{A} = \mathbf{F}\mathbf{U}_N$, where $\mathbf{U}_N$ is an $N\times N$ unitary discrete Fourier transform (DFT) matrix \cite{myers2018spatial}.
\par We describe CS matrices generated by selecting different vectors for $\mathbf{f}$. It can be shown that the CS matrix $\mathbf{A}$ has equal column norms when the magnitude of the DFT of $\mathbf{f}$ is flat. A Zadoff-Chu (ZC) sequence $\mathbf{f}^{\mathrm{ZC}}$ \cite{chu1972polyphase}, defined by $f^{\mathrm{ZC}}_i = \exp{\left({-\mathsf{j}\pi (i-1)^2}/{N}\right)},~ i\in[N]$, is an example of a sequence that achieves a flat DFT magnitude. Implementing this sequence, however, requires $\log_2{N}$-bit phase shifters for an $N$-element array; this corresponds to $5$-bit phase shifters in our setup. In practice, low-resolution phase shifters (e.g. $2$-bit) are preferred due to their low power and low hardware complexity \cite{anjos201914}. Under this low-resolution constraint, it is not always possible to construct an $\mathbf{f}$ that has a uniform DFT magnitude \cite{armario2020almost}. In such a case, $\mathbf{f}$ may be chosen at random from the set of feasible low-resolution codes for CS \cite{romberg2009compressive}. 
\par In this paper, we chose two codes $\mathbf{f}^{\mathrm{1}}$ and $\mathbf{f}^{\mathrm{2}}$ from the $2$-bit alphabet $\{1,\imj, -1, -\imj\}^{32}$. The vector $\mathbf{f}^{\mathrm{1}}$ resulted in CS matrices with  $d_\mathrm{max}/d_\mathrm{min}\approx 2.43$, and $\mathbf{f^{\mathrm{2}}}$ resulted in CS matrices with $d_\mathrm{max}/d_\mathrm{min}\approx 2.83$. For a benchmark, we also consider the ZC sequence over a $5$-bit alphabet that results in CS matrices with  $d_\mathrm{max}/d_\mathrm{min}=1$. We use $\rho = 2.63\sigma$ in \eqref{eqn:theorem2}, which corresponds to $\mathrm{Pr}\!\left\lbrace E \right\rbrace \geq 0.97$ in \eqref{eqn:lemma1}. The three CS matrix designs are evaluated with the OMP in terms of the probability of successfully identifying the support and the normalized MSE (NMSE) of the estimate. The NMSE of the estimated sparse vector $\hat{\mathbf{x}}$ is defined as $\mathbb{E}[\|\mathbf{x}-\hat{\mathbf{x}}\|_2^2]/\mathbb{E}[\|\mathbf{x}\|_2^2]$. 
\par We observe from Fig. \ref{fig:nmse} that CS matrices with $d_\mathrm{max}/d_\mathrm{min}=1$ result in about $1.43$ dB lower NMSE than those with $d_\mathrm{max}/d_\mathrm{min}\in\{2.43,2.83\}$. The observation aligns with our findings in Theorem \ref{theorem2}, as shown by the upper-bound plots in Figure~\ref{fig:nmse}. The probability of support recovery is defined as the fraction of instances when the OMP exactly recovers $\Lambda$. Fig. \ref{fig:prob} shows that CS matrices with a smaller $d_{\mathbf{max}}/d_{\mathbf{min}}$ lead to a higher probability in recovering the sparse support. To achieve $98\%$ success in support recovery, we observe that $\mathbf{x}_{\mathrm{min}} \approx 0.79 \sigma $ when $d_\mathrm{max}/d_\mathrm{min}=1$, and $\mathbf{x}_{\mathrm{min}} \approx 1.05 \sigma$ when $d_\mathrm{max}/d_\mathrm{min}=2.43$. Thus, CS matrices with a small $d_\mathrm{max}/d_\mathrm{min}$ are desirable for the OMP as they allow support recovery even when the sparse coefficients are weak in magnitude. 
\begin{figure}[h]
\centering
\includegraphics[trim=7.3cm 7.1cm 7.3cm 7.8cm, width=0.14\textwidth]{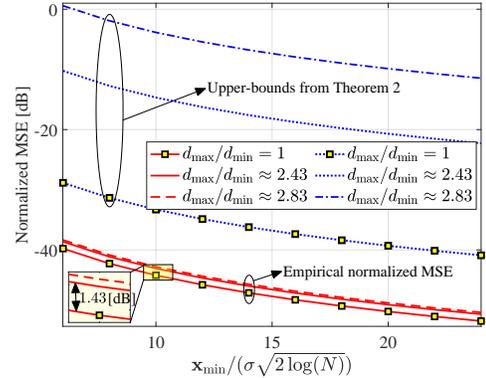}
\caption{\small 
CS matrices with equal column norms, i.e., $d_\mathrm{max}/d_{\mathrm{min}}=1$, result in smaller normalized MSE with the OMP than those with different column norms. This result coincides with our analysis in Theorem \ref{theorem2}. 
In this example, $N=32$, $M=20$ and $k=2$.\normalsize}\label{fig:nmse} 
\end{figure}
\begin{figure}[h]
\centering
\includegraphics[trim=7.3cm 7.1cm 7.3cm 8.5cm, width=0.14\textwidth]{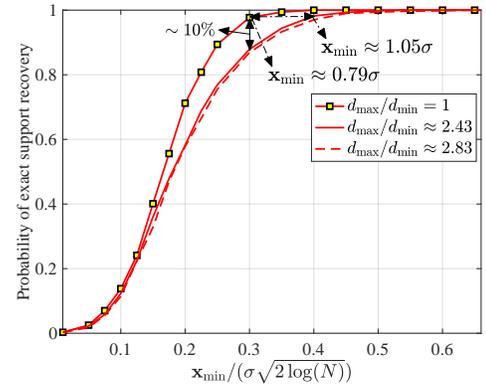}
\caption{\small For CS via the OMP, CS matrices with equal column norms result in a higher support recovery probability than those with different column norms. Here, $N=32$, $M=20$ and $k=2$.\normalsize}\label{fig:prob} 
\end{figure}
\vspace{-3,6mm}
\section{Conclusions}\label{sec6conclusion}
The OMP algorithm is often analyzed under the assumption of equal column norms in CS matrices, but this assumption is not always true in typical CS applications. In this paper, we derived performance guarantees for the OMP when the CS matrix has unequal column norms. Our results suggest that OMP performs better with CS matrices whose ratio of the maximum to minimum column norm is close to $1$. This conclusion is important in hardware-constrained applications where finding CS matrices with equal norms may not be feasible, and it can guide the selection of effective CS matrices. 

\ifCLASSOPTIONcaptionsoff
  \newpage
\fi

\bibliography{References}
\bibliographystyle{ieeetr}

\end{document}